\documentclass[american,aps,prl,reprint,tightenlines,superscriptaddress,twocolumn,twoside,longbibliography,nofootinbib]{revtex4-1}
\pdfoutput=1

\usepackage[utf8]{inputenc}
\usepackage{float}
\usepackage[unicode=true,pdfusetitle, bookmarks=true,bookmarksnumbered=false,bookmarksopen=false, breaklinks=false,pdfborder={0 0 0},backref=false,colorlinks=false] {hyperref}
\hypersetup{ colorlinks,linkcolor=myurlcolor,citecolor=myurlcolor,urlcolor=myurlcolor}
\usepackage{braket,colortbl,amsthm,amsmath,amssymb,dsfont}
\usepackage{accents}
\renewcommand*{\dot}[1]{%
  \accentset{\mbox{\bfseries .}}{#1}}
\definecolor{myurlcolor}{rgb}{0,0,0.7}
\usepackage{graphics,graphicx,relsize}
\usepackage{xcolor}
\usepackage{XCharter}
\usepackage[T1]{fontenc}
\usepackage{mathptmx}
\usepackage[left=1.6cm,right=1.6cm,top=2.45cm,bottom=2.45cm]{geometry}
\usepackage[caption=false]{subfig}
\usepackage{bigints}
\usepackage{accents}

\usepackage{enumitem}

\theoremstyle{plain}
\newtheorem{theorem}{Theorem}

\newtheorem{proposition}{Proposition}
\def\bea{\begin{eqnarray}}
\def\eea{\end{eqnarray}}
\def\ba{\begin{array}}
\def\ea{\end{array}}

\def\beq{\begin{equation}}
\def\eeq{\end{equation}}

\usepackage[normalem]{ulem}
\makeatletter
\let\oldsection\section
\renewcommand{\section}[1]{%
  \par\vspace{1ex} 
  \noindent\textit{#1}.--- 
  \ignorespaces
}

\makeatother

\begin{document}
\title{Extracting Work from Discrete Quantum Polytropic Processes}

\author{Vishal Anand}
\email{anandreuz96@gmail.com}
\affiliation{Center for Quantum Science and Technology, International Institute of Information Technology, Hyderabad 500 032, India}
\affiliation{Center for Security, Theory and Algorithmic Research, International Institute of Information Technology, Hyderabad 500 032, India}

\author{Swarup Kumar Giri}
\email{physicsme1729@gmail.com}
\affiliation{School of Physical Sciences, National Institute of Science Education and Research, HBNI, Jatni, Bhubaneswar 752 050, India}

\author{Avijit Misra}
\email{avijitmisra@iitism.ac.in}
\affiliation{Department of Physics, Indian Institute of Technology (ISM), Dhanbad, Jharkhand 826 004, India}

\author{Subhadip Mitra}
\email{subhadip.mitra@iiit.ac.in}
\affiliation{Center for Computational Natural Sciences and Bioinformatics, International Institute of Information Technology, Hyderabad 500 032, India}
\affiliation{Center for Quantum Science and Technology, International Institute of Information Technology, Hyderabad 500 032, India}

\author{Samyadeb Bhattacharya}
\email{samyadeb.b@iiit.ac.in}
\affiliation{Center for Quantum Science and Technology, International Institute of Information Technology, Hyderabad 500 032, India}
\affiliation{Center for Security, Theory and Algorithmic Research, International Institute of Information Technology, Hyderabad 500 032, India}

\begin{abstract}
\noindent
We establish an upper bound on extractable work for time-dependent, non-Markovian quantum heat engines operating with finite baths. This bound analytically isolates the distinct thermodynamic penalties arising from system-bath correlations, bath non-equilibrium, and residual interaction energy. Evaluating this framework operationally via a quantum polytropic cavity-optomechanical cycle, we demonstrate that maximal efficiency requires quasi-static operation to successfully harvest coherent, non-Markovian system-bath resonances. Conversely, optimising for maximum power enforces a strict finite-time regime. Under realistic hardware constraints, this acceleration necessitates larger discrete operational steps, where we expect Trotterisation errors to manifest as physical noise. Such noise would irreversibly suppress delicate quantum memory effects, forcing a collapse to the memoryless Markovian Otto limit. Coupled with the permanent energetic tax of switching finite-bath interactions, our results indicate that the exploitation of quantum memory resources and finite-power operation belong to different operational regimes.
\end{abstract}

\maketitle
\section{Introduction}
Environmental interactions shape the thermodynamic behaviour of most quantum systems~\cite{jaramillo2016,guo2018,ghosh2022,singh2020,TD1, mitchison2019review}. Thus, understanding the fundamental limits of energy conversion remains a central objective of quantum thermodynamics~\cite{TD2,TD4, Vinjanampathy2016}. Thermodynamics quantifies energy exchange primarily through heat and work: heat involves thermal transfer and entropy change, while work represents ordered energy transfer driven by external control. In the quantum regime, this distinction blurs for finite systems \cite{Reeb} undergoing coherent dynamics. Recent experimental advances in trapped ions~\cite{Abah2012, Maslennikov2019}, superconducting circuits~\cite{bohr2022}, quantum dots~\cite{Bergenfeldt2014, levy2017}, and cavity optomechanics~\cite{chen2024} enable the realisation of microscopic thermal machines operating beyond the assumptions of conventional thermodynamics~\cite{pradhan2025}. In these settings, the environment is rarely infinitely large or unaffected by the system. As a result, finite-bath effects, memory-induced dynamics, and system-environment correlations can no longer be ignored. While maximum extractable work is traditionally bounded by the non-equilibrium free-energy difference of the working medium, this relies on large equilibrating baths, negligible interaction energies, and Markovian dynamics. Because realistic devices upend this classical energetic balance, characterising work extraction under finite, non-Markovian conditions requires better thermodynamic accounting in a more general framework.

In this Letter, we establish the thermodynamic limit of work extraction for time-dependent, non-Markovian quantum processes \cite{nm1,nm2,nm3,nm4,nm5,nm6,nm7, Bylicka2016}. To formulate the problem operationally, we introduce a quantum polytropic process constructed from alternating discrete adiabatic and isochoric transformations via a Suzuki-Trotter decomposition. This establishes a natural quantum analogue to classical polytropic processes applicable in non-equilibrating, finite-bath scenarios. Our central result is a general upper bound that analytically separates the maximum extractable work into four distinct physical contributions: the drop in the system's effective free energy, the thermodynamic penalty of creating system-bath correlations, the cost of driving the finite bath out of equilibrium, and the change in interaction energy. While prior finite-bath frameworks~\cite{work1} rely heavily on energy-conserving operations, our bound explicitly accommodates continuous, time-dependent external driving -- extending standard free-energy relations while recovering known limits~\cite{popescu1}. 

We validate this theoretical bound in a physical setup: a cavity platform containing collectively trapped atoms coupled to vibrating mirrors, in which the phononic modes serve as finite thermal environments. Within this architecture, we construct a hybrid quantum thermal cycle that interpolates between Otto and Stirling operations. Our results establish the thermodynamic constraint on work extraction in non-Markovian quantum devices, providing a framework for finite-bath heat engines operating beyond the equilibrium paradigm.

\begin{figure}[t!]
    \centering
    \includegraphics[width=\columnwidth]{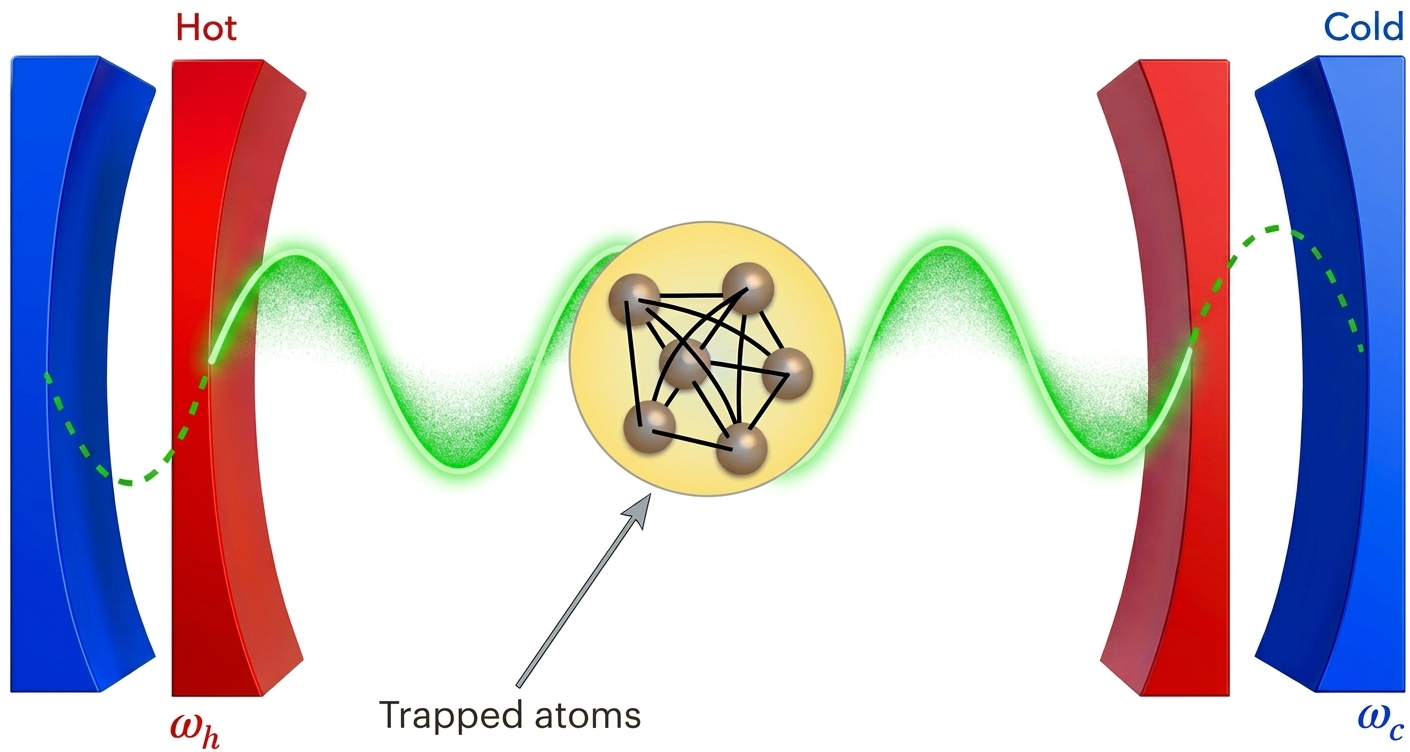}
    \caption{{\bf Physical realisation of a quantum polytropic process:} Uniformly interacting atoms trapped inside a single-mode cavity of adjustable length. The cavity operates between two lengths: at its shortest, the mode frequency is $\omega_h$, coupling the system to the hot bath (red); at its longest, the frequency is $\omega_c$, coupling it to the cold bath (blue).}
    \label{fig:trap}
\end{figure}
\section{Constructing a polytropic process} 
In classical thermodynamics, a polytropic process for an ideal gas generalises isothermal and adiabatic processes. It follows $PV^{\zeta}=$ constant, where $P$ is pressure, $V$ is volume, and the polytropic index $\zeta$ is a real number. For an isothermal process, $\zeta=1$, while for an adiabatic process, $\zeta$ is the ratio of specific heats. By tuning $\zeta$, this equation captures various thermodynamic transformations. To extend this concept to the quantum regime, we first define quantum isochoric and adiabatic processes using the schematic in Fig.~\ref{fig:trap}.

This schematic shows a possible physical setup: a collection of atoms interacting uniformly, behaving as a single quantum entity, trapped in an optical cavity bounded by vibrating mirrors. The cavity length is adjustable, allowing the setup to effectively act as a quantum particle in a box with movable walls. The atomic system exchanges energy with the vibrational modes of the mirrors. Since the system's characteristic frequency is inversely proportional to the cavity length, tuning this length shifts the system into resonance with different mirror frequencies. At its longest, the cavity supports a lower frequency $\omega_c$ coupled to a cold bath at temperature $T_c$, and at its shortest, a higher frequency $\omega_h$ coupled to a hot bath at temperature $T_h$. Using this architecture, we define two fundamental processes in general.
\begin{itemize}[leftmargin=*]
    \item[--]
    {\it Quantum isochoric process:}
    A quantum isochoric process couples the working medium to a thermal reservoir, allowing heat exchange without any work. In our setup, this occurs when the cavity length remains fixed. The system resonates with a specific cavity mode ($\omega_c$ or $\omega_h$), driving thermalisation. We represent the working medium and the mirror vibrations by density matrices $\rho_S$ and $\tau_E$, respectively, where $\rho_S$ is an arbitrary state and $\tau_E$ is the thermal state of the active mode. The joint state evolves via the energy-conserving global unitary $U_{SE}=\exp(-iHt/\hbar)$. Here, $t$ is the interaction time and $H=H_S+H_E+H_{\rm int}$ is the global Hamiltonian. Imposing the condition $[H_{\rm int},~H_S+H_E]=0$ ensures the global operation remains energy preserving. The resulting isochoric map on the working medium is therefore $\Phi_{\rm iso}(\rho_S)=\text{Tr}_E\left[U_{SE}(\rho_S\otimes\tau_E) U_{SE}^\dagger\right]$.
    
    \item[--]
    {\it Quantum adiabatic process:}
    Like its classical counterpart, a quantum adiabatic process involves no heat exchange. While traditionally associated with invariant probability distributions~\citep{nori1, Zhang2014}, extracting work inevitably alters state occupation probabilities. We therefore model the quantum adiabatic process broadly as a local, time-dependent unitary evolution of the working medium. In our setup, this corresponds to expanding or contracting the cavity length. This unitary driving shifts the system's characteristic frequency between $\omega_c$ and $\omega_h$, directly manipulating the energy-level spacings. The adiabatic map on the working medium is thus
    $\Phi_{\rm ad}(\rho_S)=U_S\rho_S U_S^\dagger$.
\end{itemize}
Below, we map these general fundamental thermodynamic operations to a specific microscopic model for a case study.

\section{Quantum polytropic process}
In an ideal isothermal process, the cavity length would change infinitely slowly while the system thermalises. However, strict isothermal conditions are unrealistic here because the temperature of the finite vibrational mode can fluctuate during the interaction. To resolve this, we construct a general quantum polytropic process by alternating infinitesimal adiabatic and isochoric steps~\citep{Anders2013, Baumer2019, Chen2021, yuan2024}. Each adiabatic step marginally changes the cavity length, while each isochoric step allows partial thermalisation.

Let $\mathcal{U}_S^{(k)}(\epsilon_1)=\exp(-iH_{\rm adi}^{(k)}\epsilon_1/\hbar)$ represent the local unitary driving on the system for duration $\epsilon_1$, and $\mathcal{U}_{SE}^{(k)}(\epsilon_2)=\exp(-iH_{\rm iso}^{(k)}\epsilon_2/\hbar)$ represent the global system-mode interaction for duration $\epsilon_2$ during the $k$-th iteration (here, the mode essentially acts as the environment). The full quantum polytropic map is then
\begin{align}
    \Phi_{\rm pol}(\rho_S)= \text{Tr}_E&\left[\left(\prod_{j=\mathcal{M}}^{1}\mathcal{U}^{(j)}_{SE}(\epsilon_2)\mathcal{U}^{(j)}_S(\epsilon_1)\right)\left(\rho_S\otimes\tau_E\right)\right.\nonumber\\
    &\ \left.\left(\prod_{k=1}^{\mathcal{M}}\mathcal{U}^{(k)\dagger}_S(\epsilon_1)\mathcal{U}^{(k)\dagger}_{SE}(\epsilon_2)\right)\right].
\end{align}
Here, the driving shifts the system's characteristic frequency from an initial value $\omega_{\rm ini}$ to a final value $\omega_{\rm fin}$. The total duration of each iteration is $\epsilon = \epsilon_1 + \epsilon_2$, with the time division parameterised by $\kappa \in [0,1]$ such that $\epsilon_1 = \kappa\epsilon$ and $\epsilon_2 = (1-\kappa)\epsilon$. The parameter $\kappa$ controls the ratio of coherent driving to thermalisation, serving as the operational quantum analogue to the classical polytropic index $\zeta$. Unlike standard repeated-interaction collision models that enforce Markovianity by continually discarding environmental ancillae, our global unitary framework retains the persistent finite bath, explicitly permitting the buildup of system-environment correlations and non-Markovian memory. The limit $\kappa \to 1$ implies purely adiabatic unitary evolution, $\kappa \to 0$ is isochoric thermalisation, and intermediate values interpolate between them by dictating the effective heat capacity of the stroke. By taking a large number of steps $\mathcal{M}$ over a total operation time $t=\mathcal{M}\epsilon$, we can evaluate the reduced dynamics of the polytropic process $\Phi_{\rm pol}$. Crucially, while each isochoric step conserves energy locally, the total global unitary $U_{\rm pol}=\prod_{k=\mathcal{M}}^1\mathcal{U}^{(k)}_{SE}(\epsilon_2)\mathcal{U}^{(k)}_S(\epsilon_1)$ is driven by a time-dependent Hamiltonian and does not conserve energy. This provides a physically realisable thermodynamic process from which we can derive the fundamental limits on extractable work.

\section{Extractable work}
For any non-adiabatic polytropic stroke, the working medium is subjected to environmental interaction. The system and environment evolve under a global unitary operation $\rho_S \otimes \tau_E \rightarrow U_g (\rho_S \otimes \tau_E) U_g^\dagger$, generated by the total time-dependent Hamiltonian $H(t)=H_S(t)+H_E+H_{\rm int}(t)$. The rate of change in total energy is thus $\dot{E}_{\rm total} = \dot{E}_S+\dot{E}_E+\dot{E}_{\rm int}$, where $E_r=\text{Tr}[H_r U_g(\rho_S\otimes\tau_E) U_g^\dagger]$ for $r \in \{S,E,\text{int}\}$. Integrated over the process, global energy conservation dictates that the total energy drop bounds the maximum work transferred to an external battery, $W_{\rm ext} \le -\Delta E_{\rm total}$~\cite{Alipour2016}. We identify $Q=-\Delta E_E$ as the heat exchanged with the finite bath. Moreover, because the residual interaction energy is inaccessible for useful extraction, we define the generated work on the working medium as $W_{\rm gen}=\Delta E_{\rm total}-\Delta E_{\rm int}$. This yields a first-law-like relation for our setup: $\Delta E_S = Q+W_{\rm gen}$.

While $W_{\rm gen}$ is closely associated with the available free energy, the finiteness of the bath and the buildup of correlations prevent it from being fully extractable. Extending the formalism of Ref.~\citep{work1}, we establish the fundamental limit on useful work for this architecture.

\begin{theorem}[Upper bound on extractable work]\label{thm:1}
The extracted work $W_{\rm ext}$ from a bipartite system consisting of a working medium $S$ and a finite bath $E$ undergoing a global unitary evolution with time-dependent external driving, and operating within a continuous non-Markovian cycle, is upper-bounded by:
\begin{align}\label{eq:thmbound}
W_{\rm ext} \leqslant -\Delta F_S^{\rm eff} - W_{\rm drive} - \frac{1}{\beta} \left[ \Delta I_{S:E} + \Delta S_E^{\rm eff} \right] - \Delta E_{\rm int},
\end{align}
where $-\Delta F_S^{\rm eff}$ is the drop in the effective non-equilibrium free energy of the system, $W_{\rm drive} = -\ln(Z^\prime/Z)/\beta$ is the work done by the external drive on the system, $\Delta I_{S:E} = I_{S:E}^\prime - I_{S:E}$ is the change in system-bath mutual information, $\Delta S_E^{\rm eff}$ quantifies the change in the bath's relative entropy with respect to its effective thermal state, and $\Delta E_{\rm int}$ is the change in the interaction energy.   
\end{theorem}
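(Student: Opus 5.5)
The plan is to start from the energetic bound $W_{\rm ext}\le -\Delta E_{\rm total}=-\Delta E_S-\Delta E_E-\Delta E_{\rm int}$ and rewrite $-\Delta E_S-\Delta E_E$ entirely in terms of entropic quantities plus a partition-function term. The engine is unitary invariance of the von Neumann entropy, $S(\rho'_{SE})=S(\rho_S\otimes\tau_E)=S(\rho_S)+S(\tau_E)$, where primes denote post-process quantities. Combined with $S(\rho'_{SE})=S(\rho'_S)+S(\rho'_E)-I'_{S:E}$ and $I_{S:E}=0$ initially, this gives
\begin{equation}
\Delta S_S+\Delta S_E=\Delta I_{S:E}.
\end{equation}
All correlation penalties will be generated by this entropy balance.

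Next, I introduce reference Gibbs states for the system at the initial and final Hamiltonians, $\gamma_S=e^{-\beta H_S}/Z$ and $\gamma'_S=e^{-\beta H'_S}/Z'$, since the drive changes $H_S(t)$. I then define $F_S^{\rm eff}=E_S-\beta^{-1}S(\rho_S)$, understood as the nonequilibrium free energy relative to the instantaneous Hamiltonian. For this quantity, $F_S^{\rm eff}=\beta^{-1}\big[D(\rho_S\|\gamma_S)-\ln Z\big]$. Taking the difference of these expressions,
\begin{equation}
-\Delta E_S=-\Delta F_S^{\rm eff}-\beta^{-1}\Delta S_S,
\end{equation}
and the $\ln(Z'/Z)$ shift, which is $-\beta W_{\rm drive}$, separates out. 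I will track it explicitly, because in the paper's convention $\Delta F^{\rm eff}_S$ is presumably the relative-entropy part $\beta^{-1}\Delta D(\rho_S\|\gamma_S)$. In that convention $-\Delta E_S=-\Delta F_S^{\rm eff}-W_{\rm drive}-\beta^{-1}\Delta S_S$.

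For the bath I use the identity $\beta\,{\rm Tr}[H_E(\rho'_E-\tau_E)]=D(\rho'_E\|\tau_E)+S(\rho'_E)-S(\tau_E)$. This yields $-\Delta E_E=-\beta^{-1}\big[D(\rho'_E\|\tau_E)+\Delta S_E\big]$. If the bath's reference state is taken to be an effective thermal state $\tau_E^{\rm eff}$, with a possibly shifted temperature to accommodate the finite bath, the same identity holds with an additional free-energy offset. I would define $\Delta S_E^{\rm eff}$ as the resulting relative-entropy change, noting that it is nonnegative when the reference is the initial $\tau_E$. Summing the system and bath pieces and substituting the entropy balance $\Delta S_S+\Delta S_E=\Delta I_{S:E}$ gives
\begin{equation}
-\Delta E_S-\Delta E_E=-\Delta F_S^{\rm eff}-W_{\rm drive}-\beta^{-1}\big[\Delta I_{S:E}+\Delta S_E^{\rm eff}\big].
\end{equation}
Inserting this into the energetic bound and keeping $-\Delta E_{\rm int}$ separate yields Eq.~\eqref{eq:thmbound}. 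Since the only inequality used is $W_{\rm ext}\le-\Delta E_{\rm total}$ from Ref.~\cite{Alipour2016}, the result is in fact an identity-plus-one-inequality.

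The main obstacle is bookkeeping the definitions so that the partition-function term appears exactly once with the stated sign. This requires making $F_S^{\rm eff}$, $W_{\rm drive}$ and $\Delta S_E^{\rm eff}$ mutually consistent, and, for a genuinely cyclic non-Markovian operation, handling the possibility that $\tau_E^{\rm eff}$ differs from the initial $\tau_E$. If the effective bath temperature differs from $\beta$, I would first try absorbing the mismatch $(\beta_{\rm eff}-\beta)\Delta E_E$ into $\Delta S_E^{\rm eff}$. If that fails, I would restrict to a single $\beta$ and treat the effective state as a relabelling.
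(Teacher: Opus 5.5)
Your proposal is correct and is essentially the paper's argument. The paper starts from the same energetic bound and rewrites the bare energy drop via the joint identity $S(\rho_{SE}\|\tau_{SE})-S(\rho_{SE}'\|\tau_{SE}')=\beta(\mathrm{Tr}[H_0\rho_{SE}]-\mathrm{Tr}[H_0'\rho_{SE}'])-\ln(Z'/Z)$, which implicitly uses unitary invariance of the global entropy. It then splits $S(\rho_{SE}\|\tau_S\otimes\tau_E)=S(\rho_S\|\tau_S)+S(\rho_E\|\tau_E)+I_{S:E}$, which is exactly your marginal identities combined with $\Delta S_S+\Delta S_E=\Delta I_{S:E}$. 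Your reading of the conventions also matches the paper: it defines $-\Delta F_S^{\rm eff}=\beta^{-1}[S(\rho_S\|\tau_S)-S(\rho_S'\|\tau_S')]$ and takes $\tau_E\approx\tau_E'$ at a single $\beta$, so $W_{\rm drive}$ enters exactly once.
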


We present a short, intuitive sketch of the proof; the detailed proof of the theorem is found in the Supplementary Material. First, we model work extraction by introducing an explicit, weakly coupled battery. Global energy conservation dictates that the maximum work transferred to this battery is bounded by the total energy drop of the system-environment composite. We then evaluate this bipartite energy drop using the fundamental identity connecting energy changes to the evolution of the joint relative entropy $S(\rho_{SE}||\tau_{SE})$ with respect to the instantaneous thermal state. Finally, decomposing this global relative entropy into local marginals, system-bath mutual information, and the bath's deviation from equilibrium exactly isolates the physical thermodynamic penalties expressed in the theorem. 

For the special case of large and static baths weakly interacting with the system, $\Delta I_{S:E}$ and $\Delta S_E^{\rm eff}$ become negligible. Additionally, if there is no external periodic driving present in the dynamics, the upper bound reduces to $\widetilde W_{\rm upp}=-\Delta E_S+\frac{1}{\beta}\Delta S_S-\Delta E_{\rm int}$. For a process governed by an energy-preserving global unitary where the interaction energy is a constant of motion ($\Delta E_{\rm int} = 0$), this bound then recovers the known limit: the change in the free energy of the working medium, $\Delta F = \Delta E_S - \Delta S_S/\beta$~\cite{popescu1}.

\begin{figure*}
    \captionsetup[subfigure]{labelformat=empty}
    \centering
    \subfloat{\includegraphics[width=0.49\textwidth]{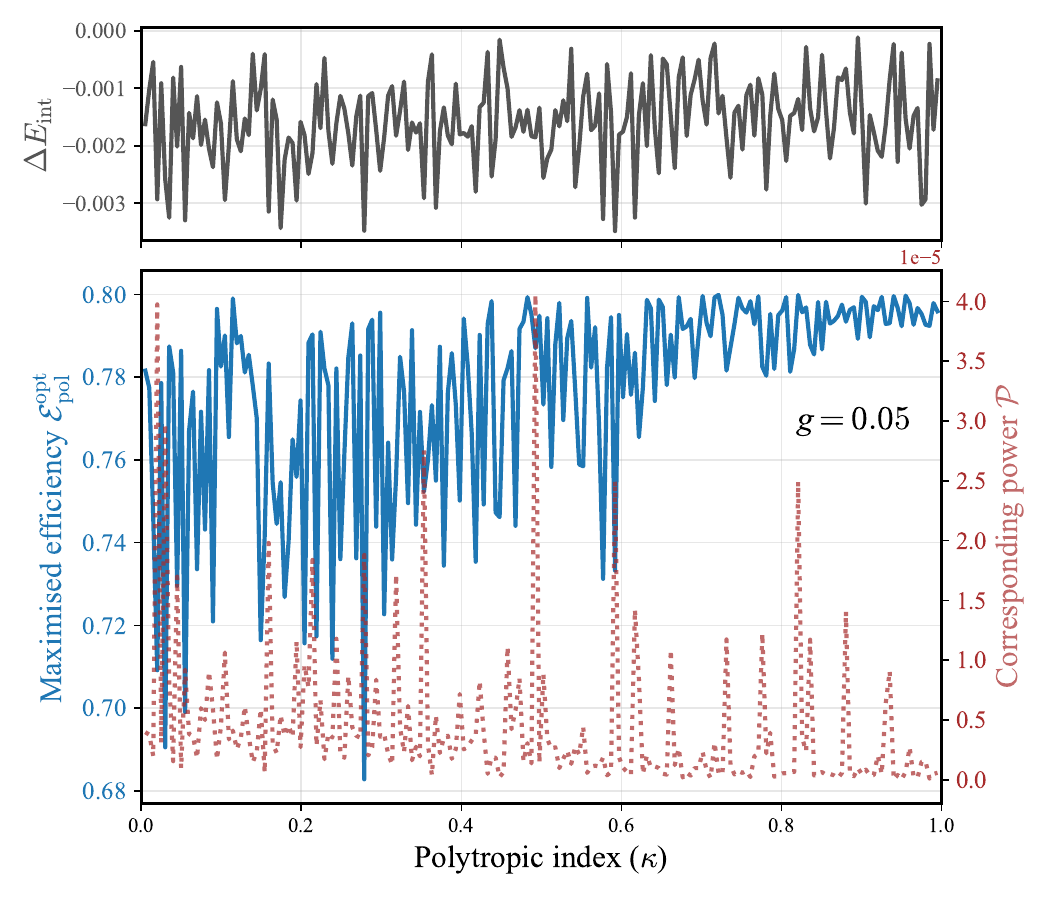}\label{fig:max_eff}}\hfill
    \subfloat{\includegraphics[width=0.49\textwidth]{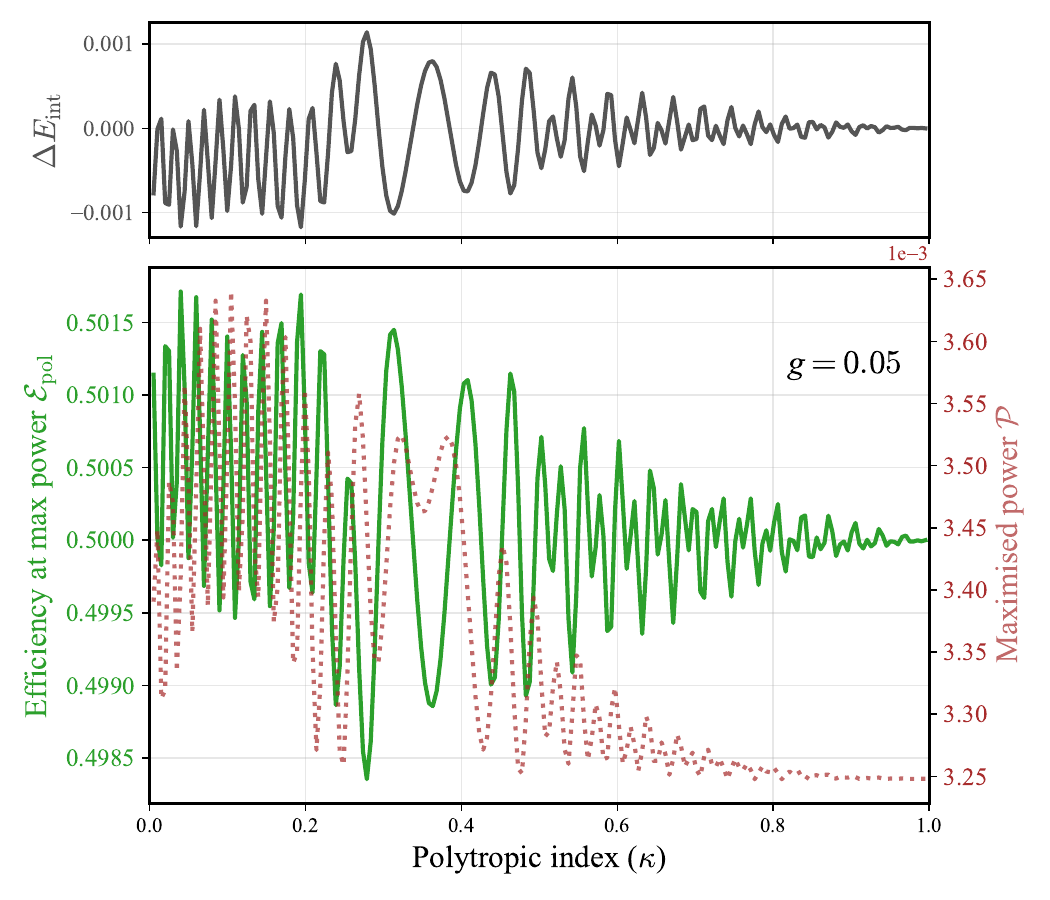}\label{fig:max_pow}}
    \caption{\textbf{Thermodynamic performance and the finite-bath interaction cost of the hybrid cavity-optomechanical cycle.} The main panels plot performance bounds as a function of the polytropic index $\kappa$. To accurately capture the increasingly sparse quasi-static resonances as $\kappa \to 1$, the optimisation is performed over an adaptive interaction time grid $t \in [100, t_{\max}(\kappa)]$, scaling up to $t_{\max} \approx 22000$ (where $t = \mathcal{M}\epsilon$). The top narrow panels display the corresponding residual interaction cost $\Delta E_{\text{int}}$ incurred per cycle. {\bf (Left)} When optimised for maximum efficiency (blue), the engine exploits non-Markovian memory effects, producing sharp oscillatory resonances but vanishing power (red dotted line). This quasi-static regime incurs an energy cost (top panel) to sever the deep system-bath correlations. {\bf (Right)} Optimising for maximum power (red dotted line) enforces a finite-time regime where Trotterisation errors produce noise, irreversibly suppressing the fragile coherent resonances and forcing an efficiency collapse (green line) to the Markovian limit. However, fast driving physically prevents any system-bath correlation build-up, dropping the decoupling cost to zero. All thermodynamic quantities are evaluated after the working medium converges to a stable limit cycle. We set $\hbar=k_B=1$. System parameters: $\omega_h = 2.0$, $\omega_c = 1.0$, $\beta_h = 0.2$, $\beta_c = 1.0$, and coupling strength $g = 0.05$. (See Fig.~\ref{fig:performancevariance} in the Supplementary Material for more choices of $g$.)  \label{fig:performance}} 
\end{figure*}
\section{Microscopic model and hybrid thermal cycle} 
To evaluate our bound operationally, we construct a quantum thermodynamic cycle based on the schematic depicted in Fig.~\ref{fig:trap}. First, we restrict the working medium to a two-level system. For a system comprising $N$ identical two-level atoms, this is justified by restricting the energetic transitions. Let us assume there are only two possibilities: either all the atoms are in their ground state $\ket{0_1 0_2 \dots 0_N}$, or only one of them is in its excited state $\ket{0_1 0_2 \dots 1_i \dots 0_N}$. The restricted two-level subspace thus has a collective ground state $\ket{0}=\ket{0_1 0_2 \dots 0_N}$ and an excited state defined as an equal superposition of all possible single excitations, $\ket{1}=\frac{1}{\sqrt{N}}\sum_{i=1}^N\ket{0_1 0_2 \dots 1_i \dots 0_N}$. The Hamiltonian of the working medium can be written as $H_S=\hbar\omega_0\sigma_z$, with $\sigma_z$ being the diagonal Pauli matrix. 

The mirror vibrations, acting as the environment, can be modelled as an infinite collection of oscillators with the Hamiltonian $H_{E_{\rm tot}}=\sum_{i=1}^\infty H_{E_i}=\sum_{i=1}^\infty\hbar\omega_i a_i^\dagger a_i$. We further restrict ourselves to the case where the working medium energetically interacts with only the ground and first excited states of the oscillators. Note that at any given time, the working medium interacts with only a single active mode. We assume there are two relevant collective mirror vibrations with modes in thermal states identified by two inverse temperatures, $\beta_c$ and $\beta_h$ ($\beta_c > \beta_h$). All modes are part of one of these two groups, and if left in isolation from the working medium, they quickly thermalise back to equilibrium. Under resonant conditions ($\omega_0 = \omega_i$), the Hamiltonian of the proposed schematic can therefore be written as 
\begin{equation}\label{eq:ham1}
    H=\hbar\omega_i\sigma_z\otimes\mathbb{I}+\hbar\omega_i\mathbb{I}\otimes\sigma_z+\hbar g\left(\sigma_-\otimes\sigma_++\sigma_+\otimes\sigma_-\right).
\end{equation}

With this architecture in mind, we design a hybrid thermal cycle of alternating isochoric and polytropic strokes. When the polytropic processes reduce to purely adiabatic or isothermal limits, the cycle recovers the quantum Otto and Stirling engines, respectively. By tuning $\kappa$, we can balance the efficiency of a Stirling engine against the speed of an Otto engine, minimising the thermodynamic costs of using finite baths. The first stage ($A\rightarrow B$) is a polytropic expansion. The initial system frequency $\omega_h$, which is in resonance with the hot bath, is slowly decreased step-by-step to $\omega_c$ by expanding the cavity length, bringing it into resonance with the cold bath. The next step ($B\rightarrow C$) is isochoric cooling; the working medium remains in resonance with the cold bath, dumping heat via thermalisation. This is followed by a polytropic compression ($C\rightarrow D$), where the characteristic frequency is slowly increased from $\omega_c$ back to $\omega_h$ by contracting the cavity length. Finally ($D\rightarrow A$), an isochoric heating stroke extracts heat from the hot bath to complete the cycle.

To estimate its efficiency, we determine the work extracted during the polytropic contraction and expansion ($W_{\rm con}$ and $W_{\rm exp}$, respectively). Heat is absorbed from the hot bath during the isochoric heating ($Q_{\rm iso}^h$) and the polytropic expansion ($Q_{\rm pol}^h$). From Eq.~\eqref{eq:thmbound}, we see that the actual efficiency is capped by the upper bound on extractable work, yielding an optimal efficiency limit:
\begin{equation}
 \mathcal{E}_{\rm pol}\leqslant\mathcal{E}_{\rm pol}^{\rm opt}=\frac{W^{\rm upp}_{\rm con}+W^{\rm upp}_{\rm exp}}{Q_{\rm iso}^h+Q_{\rm pol}^h},   
\end{equation}
where $W^{\rm upp}_{\rm con}$ and $W^{\rm upp}_{\rm exp}$ are the theoretical upper bounds for the extracted work during the respective polytropic strokes. 

For the device to function operationally as a heat engine, the initial state populations must satisfy the thermodynamic condition: $q_{h,0} < p_A < (q_{c,0}-a q_{h,0})/(1-a)$, where $0\leqslant a\leqslant 1$, and $p_A$, $q_{h,0}$, and $q_{c,0}$ are the ground state populations of the initial system state, the hot bath, and the cold bath, respectively. Physically, this inequality ensures the working medium is sufficiently colder than the hot bath to absorb heat during expansion, yet remains hot enough to dump heat into the cold bath during compression. Violating these population bounds reverses the macroscopic heat flows, transitioning the cycle into a refrigerator or a heater.

\section{Discussion} 
Within this operational regime, the tension between theoretical thermodynamic ceilings and practical engine performance becomes particularly evident as the cycle approaches the adiabatic limit ($\kappa \to 1$, see Fig.~\ref{fig:performance}). When optimised for maximum efficiency, the engine compensates for the vanishing system-bath coupling by pushing the interaction time toward the quasi-static limit. The extended timescale allows the working medium to exploit fragile non-Markovian memory resonances, achieving near-Carnot efficiencies at the cost of vanishing power output. Conversely, when the cycle is driven to maximise power, the engine is forced to operate in finite time. This finite-time constraint precludes the system from resolving these slow, delicate resonances; the coherent system-bath ringing is entirely suppressed. Consequently, the dynamics collapse into a standard, decoupled quantum Otto cycle, where the efficiency at maximum power approaches the limit, $\eta_{\text{Otto}} = 1 - \omega_c/\omega_h = 0.5$.

Underpinning this operational trade-off is the fundamental asymmetry of the thermodynamic penalties derived in our upper bound. The bound restricts extractable work via both system-bath correlations ($\Delta I_{S:E}$) and bath non-equilibrium ($\Delta S_{E}^{\rm eff}$). Yet these terms originate from distinct physical mechanisms. The mutual information represents theoretically recoverable locked work, whereas the finite bath's divergence from its equilibrium state reflects irreversible dissipation. The working medium can only mitigate these penalties by exploiting non-Markovian memory in the quasi-static limit. When the cycle is accelerated to maximise power, the engine is forced to operate in a strict finite-time regime (small $t$). Because physical hardware requires a minimum feasible switching time ($\epsilon$), operating at high speeds reduces the number of discrete steps ($\mathcal{M}$). Under this hardware constraint, we expect the formally neglected $\mathcal{O}(\epsilon^2)$ Trotter errors to manifest as physical dephasing noise. Such noise, coupled with the finite-time constraint, would wash out the delicate coherent resonances and force a collapse to the Markovian limit. The change in interaction energy $\Delta E_{\rm int}$ does not vanish over a complete periodic cycle; instead, operating in the quasi-static regime to maximise efficiency incurs a substantial, oscillatory energetic cost to sever the deep non-Markovian correlations. Conversely, operating in the finite-time limit physically prevents this system-bath correlation build-up, effectively dropping the decoupling cost to zero. Thus, while Trotter dephasing noise is expected to drive the efficiency collapse to the Markovian limit, our numerics confirm that fast driving simultaneously prevents the accumulation of the finite-bath decoupling penalty. Ultimately, under realistic experimental constraints, our results indicate that the exploitation of non-Markovian resources and finite-power operation belong to different operational regimes for systems operating in discrete steps.

\section{Acknowledgement}
 A.M. acknowledges support
from the Anusandhan National Research Foundation through PMECRG grant (Grant No. ANRF/ECRG/2024/003836/PMS). SM and SB acknowledge support from the Ministry of Electronics and Information Technology (MeitY), Government of India, under Grant No. 4(3)/2024-ITEA.
\bibliography{reference}

\newpage
\onecolumngrid
\let\section\oldsection
\renewcommand{\theequation}{S\arabic{equation}}
\setcounter{theorem}{0}
\setcounter{equation}{0}
\renewcommand{\thefigure}{S\arabic{figure}}
\renewcommand{\theHfigure}{S.\arabic{figure}}
\setcounter{figure}{0}
\section*{Supplementary Material}
\section*{Proof of the theorem}\label{sec:thm}
\noindent
\begin{theorem}[Upper bound on extractable work]
The extracted work $W_{\rm ext}$ from a bipartite system consisting of a working medium $S$ and a finite bath $E$ undergoing a global unitary evolution with time-dependent external driving, and operating within a continuous non-Markovian cycle, is upper-bounded by:
\begin{align*}
W_{\rm ext} \leqslant -\Delta F_S^{\rm eff} - W_{\rm drive} - \frac{1}{\beta} \left[ \Delta I_{S:E} + \Delta S_E^{\rm eff} \right] - \Delta E_{\rm int},
\end{align*}
where $-\Delta F_S^{\rm eff}$ is the drop in the effective non-equilibrium free energy of the system, $W_{\rm drive} = -\ln(Z^\prime/Z)/\beta$ is the work done by the external drive on the system, $\Delta I_{S:E} = I_{S:E}^\prime - I_{S:E}$ is the change in system-bath mutual information, $\Delta S_E^{\rm eff}$ quantifies the change in the bath's relative entropy with respect to its effective thermal state, and $\Delta E_{\rm int}$ is the change in the interaction energy.   
\end{theorem}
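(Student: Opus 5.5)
We will reconstruct the bound from three ingredients: global energy bookkeeping with an explicit battery, the relative-entropy identity for the free energy, and an exact decomposition of the joint relative entropy into local and correlation pieces.

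\emph{Step 1: energy bookkeeping.} We attach a weakly coupled battery $B$ and let the joint $SEB$ evolve unitarily under the time-dependent Hamiltonian $H(t)=H_S(t)+H_E+H_{\rm int}(t)$ plus $H_B$. Energy conservation for the enlarged closed system, with the drive treated as an external work source, gives $W_{\rm ext}\le -(\Delta E_S+\Delta E_E+\Delta E_{\rm int})$, as already asserted in the main text following Ref.~\cite{Alipour2016}. The $-\Delta E_{\rm int}$ term of Eq.~\eqref{eq:thmbound} is simply carried through. It therefore remains to bound $-(\Delta E_S+\Delta E_E)$.

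\emph{Step 2: relative-entropy identity.} Let $\tau_{SE}=\tau_S\otimes\tau_E$ be the instantaneous thermal state of $H_S+H_E$ at inverse temperature $\beta$, with partition function $Z$ initially and $Z'$ finally. For any state $\rho$,
\begin{equation*}
\mathrm{Tr}[\rho(H_S+H_E)]-S(\rho)/\beta=\tfrac{1}{\beta}S(\rho\|\tau_{SE})-\tfrac{1}{\beta}\ln Z.
\end{equation*}
The global evolution is unitary, so $S(\rho_{SE}')=S(\rho_{SE})$. Taking the difference between the initial and final states then gives
\begin{equation*}
\Delta E_S+\Delta E_E=\tfrac{1}{\beta}\bigl[S(\rho'_{SE}\|\tau'_{SE})-S(\rho_{SE}\|\tau_{SE})\bigr]-\tfrac{1}{\beta}\ln(Z'/Z).
\end{equation*}
The last term is exactly $W_{\rm drive}$ as defined in the statement. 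If the bath Hamiltonian is static, this term comes only from $Z_S'/Z_S$.

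\emph{Step 3: decomposition.} We use the exact identity for product reference states,
\begin{equation*}
S(\rho_{SE}\|\tau_S\otimes\tau_E)=I_{S:E}(\rho_{SE})+S(\rho_S\|\tau_S)+S(\rho_E\|\tau_E).
\end{equation*}
We identify $S(\rho_S\|\tau_S)/\beta$, up to the $\ln Z_S$ constant that has already been absorbed into the drive term, with the effective free energy $F_S^{\rm eff}$. We identify $S(\rho_E\|\tau_E^{\rm eff})$ with the bath term whose change is $\Delta S_E^{\rm eff}$. Substituting into Steps 1 and 2 yields Eq.~\eqref{eq:thmbound}; the inequality comes solely from Step 1.

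\emph{Main obstacle.} We expect the difficulty to be the sign and placement conventions rather than any analysis. First, the energy identity produces $+\tfrac{1}{\beta}\Delta S(\rho\|\tau)$ inside $\Delta E$, so it enters the bound as $-\tfrac{1}{\beta}[\Delta I_{S:E}+\Delta S_E^{\rm eff}]$, and $W_{\rm drive}$ must be split consistently from $\Delta F_S^{\rm eff}$. Second, we must check how the marginal entropy changes are absorbed, given that global unitarity fixes only the joint entropy. Third, the finite bath's reference state needs care. If an effective temperature $\beta_E^{\rm eff}\ne\beta$ is used, we will first write everything with $\tau_E$ at $\beta$ and then use $S(\rho_E\|\tau_E)=S(\rho_E\|\tau_E^{\rm eff})+{}$(a state-independent term plus an energy term). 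This leftover energy term must be shown to cancel against $\Delta E_E$ or be absorbed into the definition of $\Delta S_E^{\rm eff}$. We would fix this by defining $\Delta S_E^{\rm eff}$ precisely as the change of $S(\rho_E\|\tau_E)$, making that identification an equality.
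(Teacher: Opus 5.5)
Your proposal follows the paper's proof essentially step for step. The shared route is battery-based energy bookkeeping giving $W_{\rm ext}\le-(\Delta E_S+\Delta E_E+\Delta E_{\rm int})$, then the relative-entropy identity with respect to the instantaneous product Gibbs state (with the $\ln(Z'/Z)$ term identified as $W_{\rm drive}$), then the exact split $S(\rho_{SE}\|\tau_S\otimes\tau_E)=S(\rho_S\|\tau_S)+S(\rho_E\|\tau_E)+I_{S:E}$, with $\Delta F_S^{\rm eff}$ and $\Delta S_E^{\rm eff}$ defined as the corresponding relative-entropy changes. Two of your details go slightly beyond the paper's text: you invoke $S(\rho'_{SE})=S(\rho_{SE})$ from unitarity explicitly, which the paper's identity only uses tacitly, and your choice to define $\Delta S_E^{\rm eff}$ via $\tau_E$ at $\beta$ agrees with the paper's static-bath remark $\tau_E\approx\tau_E'$.
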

\begin{proof}
Let us consider the working medium and bath evolving under a global unitary operation from an initial joint state $\rho_{SE}$ to a final state $\rho_{SE}^\prime$. The dynamics are generated by a time-dependent total Hamiltonian $H(t) = H_S(t) + H_E + H_{\rm int}(t)$. The time-dependence in $H_S(t)$ explicitly accounts for the unitary driving of the system, such as the modulation of the cavity length during the adiabatic and polytropic strokes. To realise the extractable work in the described setup, we consider a work-storing battery (weight) as an additional quantum system $\rho_W$, such that the now global operation is represented by an energy-preserving unitary $U_{SEW}=U_{SE} U_{SW}$. Here, we assume that the battery is weakly interacting with only the working medium. Due to the weak interaction between the system and the weight, the globally evolved state $\rho_{SEW}=U_{SEW}(\rho_{S}\otimes\tau_{E}\otimes\rho_W) U_{SEW}^\dagger$ can be assumed as $\rho_{SEW}\approx\rho_{SE}^\prime\otimes\rho_W^\prime$ with $\rho_W^\prime=\text{Tr}_{SE}[\rho_{SEW}]$. The weak system-battery coupling also allows us to assume that $\rho_{SE}^\prime\approx U_{gt}(\rho_{S}\otimes\tau_E) U_{gt}^\dagger$, where $U_{gt}$ is now a non-energy-conserving unitary with a time-dependent external driving. Hence, the extracted work stored in the battery can now be considered as $W_{\rm ext}\leqslant \text{Tr}[H_W(\rho_W^\prime-\rho_W)]$. The inequality comes because some energy can be locked in correlation. Therefore, global energy conservation gives us $W_{\rm ext}\leqslant -(\Delta E_S+\Delta E_E+\Delta E_{\rm int})$. Let us define the bare system-bath Hamiltonian as $H_0(t) = H_S(t) + H_E$. We can express the extractable work inequality in terms of the bare energy drop:
\begin{align*}
W_{\rm ext} \leqslant \text{Tr}[H_0\rho_{SE}]-\text{Tr}[H_0^\prime\rho_{SE}^\prime] - \Delta E_{\rm int}.
\end{align*}

We now consider the identity for the change in relative entropy between the actual joint state and the instantaneous thermal state $\tau_{SE} ~(\tau^\prime_{SE}) = e^{-\beta H_{S}}/Z_S\otimes e^{-\beta H_{E}}/Z_E \ (e^{-\beta H^\prime_{S}}/Z^\prime_S\otimes e^{-\beta H^\prime_{E}}/Z^\prime_E)$. Because the thermal state is defined with respect to the bare Hamiltonian $H_0$, the identity is:
\begin{align*}
S(\rho_{SE}||\tau_{SE}) -& S(\rho_{SE}^\prime||\tau_{SE}^\prime) 
= \ \beta(\text{Tr}[H_0\rho_{SE}] - \text{Tr}[H_0^\prime\rho_{SE}^\prime]) - \ln\left(\frac{Z^\prime}{Z}\right).
\end{align*}
Here $Z~(Z^\prime)=Z_SZ_E~(Z^\prime_SZ^\prime_E)$. 
By rearranging our energy conservation bound, we know that $\text{Tr}[H_0\rho_{SE}]-\text{Tr}[H_0^\prime\rho_{SE}^\prime] \geqslant W_{\rm ext} + \Delta E_{\rm int}$. Substituting this into the relative entropy identity gives:
\begin{align*}
S(\rho_{SE}||\tau_{SE}) - S(\rho_{SE}^\prime||\tau_{SE}^\prime) \geqslant \beta(W_{\rm ext} + \Delta E_{\rm int}) - \ln\left(\frac{Z^\prime}{Z}\right).
\end{align*}
Rearranging to isolate the extracted work:
\begin{align*}
W_{\rm ext} \leqslant \frac{1}{\beta} \ln\left(\frac{Z^\prime}{Z}\right) + \frac{1}{\beta} \left[ S(\rho_{SE}||\tau_{SE}) - S(\rho_{SE}^\prime||\tau_{SE}^\prime) \right] - \Delta E_{\rm int}.
\end{align*}
We identify the term associated with the changing partition function as the external driving work, $W_{\rm drive} = F_\tau^\prime - F_\tau = - \ln(Z^\prime/Z)/\beta$. Because $H_S(t)$ is time-dependent across the stroke, $Z \neq Z^\prime$, and this driving contribution must be retained. Replacing the partition function ratio yields:
\begin{align*}
W_{\rm ext} \leqslant -W_{\rm drive} + \frac{1}{\beta} \left[ S(\rho_{SE}||\tau_{SE}) - S(\rho_{SE}^\prime||\tau_{SE}^\prime) \right] - \Delta E_{\rm int}.
\end{align*}

Next, to account for continuous operation in a non-Markovian regime, we decompose the global relative entropy for both the initial and final states into their respective marginal contributions, mutual information, and an interaction mismatch parameter:
\begin{align*}
S(\rho_{SE}^{x}||\tau_{SE}^{x}) = S(\rho_S^{x}||\tau_S^{x}) + S(\rho_E^{x}||\tau_E^{x}) + I_{S:E}^{x} ,
\end{align*}
where $x$ indicates whether the state is at the beginning (unprimed) or the end (primed) of the stroke, and $\tau_{S(E)}$ denotes the system (bath) Gibbs state. Substituting these decompositions into our bound gives:
\begin{align*}
W_{\rm ext} \leqslant -W_{\rm drive}&\ + \frac{1}{\beta} \left[ \left( S(\rho_S||\tau_S) - S(\rho_S^\prime||\tau_S^{\prime}) \right) - \Delta S_E^{\rm eff} - \Delta I_{S:E}  \right] - \Delta E_{\rm int},
\end{align*}
where we have defined the changes across the stroke as: $\Delta S_E^{\rm eff} = S(\rho_E^\prime||\tau_E^{\prime}) - S(\rho_E||\tau_E)$ and $\Delta I_{S:E} = I_{S:E}^\prime - I_{S:E}$. Assuming weak coupling and a time-independent bath Hamiltonian, the effective thermal state of the bath remains approximately constant ($\tau_E \approx \tau_E^{\prime}$), allowing this term to be understood purely as the change in the bath's divergence from equilibrium. Finally, recognising that the drop in the system's effective non-equilibrium free energy is defined as $-\Delta F_S^{\rm eff} = \left[ S(\rho_S||\tau_S) - S(\rho_S^\prime||\tau_S^{\prime})\right]/\beta$, we get the exact bound:\begin{align*}
W_{\rm ext} \leqslant -\Delta F_S^{\rm eff} - W_{\rm drive} - \frac{1}{\beta} \left[ \Delta I_{S:E} + \Delta S_E^{\rm eff} \right] - \Delta E_{\rm int},
\end{align*}
which completes the proof.
\end{proof}

\section*{The general polytropic process}
\noindent
The global unitary action of the polytropic process is given by 
\begin{equation}
    \rho_S(0)\otimes\tau_E\rightarrow 
    \left[\left(\prod_{j=\mathcal{M}}^1\mathcal{U}^{(j)}_{SE}(\epsilon_2)\mathcal{U}^{(j)}_S(\epsilon_1)\right)\left(\rho_S(0)\otimes\tau_E\right)\left(\prod_{k=1}^{\mathcal{M}}\mathcal{U}^{(k)\dagger}_S(\epsilon_1)\mathcal{U}^{(k)\dagger}_{SE}(\epsilon_2)\right)\right].
\end{equation}
The Hamiltonians of the $k$-th adiabatic and isochoric steps are given by 
\renewcommand{\arraystretch}{1.5}
\begin{align*}
     H_{\rm adi}^{(k)} &= \left(\omega_{\rm ini}+\frac{k (\omega_{\rm fin}-\omega_{\rm ini})}{\mathcal{M}}\right) \, \sigma_z \otimes \mathbb{I}\quad \text{and}\quad \\
     H_{\rm iso}^{(k)} &= \left(\omega_{\rm ini}+\frac{k (\omega_{\rm fin}-\omega_{\rm ini})}{\mathcal{M}}\right)\sigma_z\otimes\mathbb{I}+\omega_{\rm ini}\mathbb{I}\otimes\sigma_z+ g\left(\sigma_-\otimes\sigma_++\sigma_+\otimes\sigma_-\right),
\end{align*}
respectively, where $g$ is the system-bath coupling strength. This allows us to write the total unitary as
\begin{align}
U_{{\rm pol}} = \prod_{k=\mathcal{M}}^1\mathcal{U}^{(k)}_{SE}(\epsilon_2)\mathcal{U}^{(k)}_S(\epsilon_1)
=\prod_{k=\mathcal{M}}^1 e^{{-i\big( H^{(k)}_{\rm iso}\,\epsilon_2 + H^{(k)}_{\rm adi}\,\epsilon_1 \big)
\;-\;
\frac{\epsilon_2\,\epsilon_1}{2}\,
\big[ H^{(k)}_{\rm iso},\, H^{(k)}_{\rm adi} \big]
}} = \prod_{k=\mathcal{M}}^1e^{A^{(k)}},
\end{align}
where $A^{(k)} = - i \big( H^{(k)}_{\rm iso}\,\epsilon_2 + H^{(k)}_{\rm adi}\,\epsilon_1 \big)
\;-\;
(\epsilon_2\,\epsilon_1/2)\,
\big[ H^{(k)}_{\rm iso},\, H^{(k)}_{\rm adi} \big]$, after ignoring terms higher-order in $\epsilon_1$ and $\epsilon_2$. Since $\big[ H^{(k)}_{\rm iso},\, H^{(k)}_{\rm adi} \big]=- g\left(\omega_{\rm ini}+k (\omega_{\rm fin}-\omega_{\rm ini})/\mathcal{M}\right)\left(\sigma_-\otimes\sigma_+-\sigma_+\otimes\sigma_-\right)$, we get
\begin{align}
    A^{(k)} =& -i \omega_k (\epsilon_1 + \epsilon_2) \sigma_z \otimes \mathbb{I} - i \omega_{\rm ini} \epsilon_2 \mathbb{I} \otimes \sigma_z - i \epsilon_2 g\, X_{{\rm int}} - \epsilon_1 \epsilon_2 g\, \omega_k Y_{{\rm int}},    
\end{align} 
where $\omega_k = \omega_{\rm ini} + k(\omega_{\rm fin}-\omega_{\rm ini})/\mathcal{M}$, $X_{{\rm int}} = \sigma_- \otimes \sigma_+ +\sigma_+ \otimes \sigma_-$, and $Y_{{\rm int}} = \sigma_- \otimes \sigma_+-\sigma_+ \otimes \sigma_-$.
Now, using the identity $e^{A^{k+1}} e^{A^{k}}= e^{A^{k+1} + A^{k} + \tfrac12 [A^{k+1}, A^{k}]}$, we can rewrite $U_{\rm pol}$ as, 
\begin{align}
    U_{\rm pol}=\exp\left(\sum_{k=1}^{\mathcal{M}} A^{(k)}\;+\;\frac{1}{2}\left(\sum_{d=1}^{\mathcal{M}-1} \sum_{k=1}^{\mathcal{M}-d} [A^{k+d}, A^{k}]\right)\right).        
\end{align}
With ${[A^{(k')}, A^{(k)}]} =2 g(k'-k)(\Delta\omega) (\epsilon_2^2 + \epsilon_2\epsilon_1) Y_{\rm int}$, these two terms can be further expressed as
\begin{align}
    \sum_{k=1}^{\mathcal{M}} A^{(k)} =&\ -i(\epsilon_1 + \epsilon_2)\mathcal{M} \left(\omega_{\rm ini} + \frac{(\mathcal{M}-1)}{2}\Delta\omega\right)\sigma_z \otimes \mathbb{I} - i \mathcal{M} \epsilon_2 \omega_{\rm ini} \mathbb{I} \otimes \sigma_z\nonumber\\
    &\ - i g \mathcal{M} \epsilon_2 X_{\text{int}} - \epsilon_1 \epsilon_2  g \mathcal{M} \left(\omega_{\rm ini} + \frac{(\mathcal{M}-1)}{2}\Delta\omega\right) Y_{\text{int}}\\ 
    \text{and} \quad \frac{1}{2} \sum_{d=1}^{\mathcal{M}-1} \sum_{k=1}^{\mathcal{M}-d} [A^{(k+d)}, A^{k}] =&\ \frac{1}{12} \Big( 2\epsilon_2(\epsilon_1 + \epsilon_2) g (\Delta\omega) \mathcal{M}(\mathcal{M}^2 - 1) Y_{\text{int}} \Big),
\end{align}
where $\Delta\omega= \omega_{\rm fin}-\omega_{\rm ini}$ is the change in frequency from the initial $\omega_{\rm ini}$ to the final $\omega_{\rm fin}$. Thus, we get,
\begin{align}
    S=\sum_{k=1}^{\mathcal{M}} A^{(k)} + \frac{1}{2} \left( \sum_{d=1}^{\mathcal{M}-1} \sum_{k=1}^{\mathcal{M}-d} [A^{k+d}, A^{(k)}] \right) =&\ -i(\epsilon_1 + \epsilon_2)\eta \sigma_z \otimes \mathbb{I} - i \mathcal{M} \epsilon_2 \omega_{\rm ini} \mathbb{I} \otimes \sigma_z\nonumber\\
    &\ - i g \mathcal{M} \epsilon_2 X_{\text{int}} + g \Big( -\epsilon_1\epsilon_2\eta + \epsilon_2(\epsilon_1 + \epsilon_2)\mu \Big) Y_{\text{int}},
\end{align}
where $\eta = \mathcal{M} \omega_{\rm ini} + \Delta\omega(\mathcal{M}+1)\mathcal{M}/2 $ and $\mu = \mathcal{M}(\mathcal{M}^2-1)\Delta\omega/6$. Since  $\mathcal{M}\epsilon_2 = \mathcal{M}(1-\kappa)\epsilon = (1-\kappa)t$, we obtain
\begin{equation}
S = -i t \overline{\omega} \sigma_z \otimes \mathbb{I} - i(1-\kappa)t \omega_{\rm ini} \mathbb{I} \otimes \sigma_z - i(1-\kappa)t g X_{{\rm int}} + \frac{1}{6}  g (1-\kappa)t^2 \Delta\omega Y_{{\rm int}},
\end{equation}
where $\overline{\omega} = (\omega_{\rm ini} + \omega_{\rm fin})/2$.

We represent $S$ in the computational basis: $\{|11\rangle, |10\rangle, |01\rangle, |00\rangle\}$. Due to the excitation-preserving nature of the interaction, the matrix naturally decouples into a direct sum of a $1 \times 1$ fully excited state block ($S_{11}$), a $2 \times 2$ single-excitation mixed block ($S_{\rm mid}$), and a $1 \times 1$ fully ground state block ($S_{00}$):
\begin{equation}
    S = S_{11} \oplus S_{\rm mid} \oplus S_{00}.
\end{equation}
For the decoupled states $|11\rangle$ and $|00\rangle$, we define the parameter $\gamma$ as 
\begin{equation}
    \gamma = (t\overline{\omega} + (1-\kappa)t\omega_{\rm ini}),
\end{equation}
so that $S_{11} = -i\gamma$ and $S_{00} = i\gamma$. Because these are numbers, their exponentials are simply: $e^{S_{11}} = e^{-i\gamma}$ and $e^{S_{00}} = e^{i\gamma}$. For the central $2 \times 2$ block, we introduce the real parameters $x, y,$ and $z$:
\begin{align}
    z = t\overline{\omega} - (1-\kappa)t\omega_{\rm ini}, \quad
    x = (1-\kappa)t g,\quad \text{and} \quad
    y = \frac{1}{6} g(1-\kappa)t^2(\Delta\omega).
\end{align}
Substituting these in the ordered basis $\{|10\rangle, |01\rangle\}$ yields the compact form:
\begin{equation}
    S_{\rm mid} = \begin{pmatrix} -iz & -ix - y \\ -ix + y & iz \end{pmatrix}.
\end{equation}
Now, since $S_{\rm mid}^2 = -\Omega^2 \mathbb{I}_2$ with $\Omega = \sqrt{x^2 + y^2 + z^2}$, we get
\begin{equation}
    e^{S_{\rm mid}} = \cos(\Omega)\mathbb{I} + \frac{\sin(\Omega)}{\Omega}S_{\rm mid} 
    =\begin{pmatrix} 
    \cos\Omega - iz\frac{\sin\Omega}{\Omega} & (-ix - y)\frac{\sin\Omega}{\Omega} \\ 
    (-ix + y)\frac{\sin\Omega}{\Omega} & \cos\Omega + iz\frac{\sin\Omega}{\Omega} 
    \end{pmatrix}.
\end{equation}
This gives us the full $4 \times 4$ unitary evolution $U = e^S$:
\begin{equation}
    U = \begin{pmatrix}
    e^{-i\gamma} & 0 & 0 & 0 \\
    0 & \cos\Omega - iz\frac{\sin\Omega}{\Omega} & (-ix - y)\frac{\sin\Omega}{\Omega} & 0 \\
    0 & (-ix + y)\frac{\sin\Omega}{\Omega} & \cos\Omega + iz\frac{\sin\Omega}{\Omega} & 0 \\
    0 & 0 & 0 & e^{i\gamma}
    \end{pmatrix}.
\end{equation}

If the initial state of the joint system is $\rho_S(0)\otimes\tau_E$, then the final global state is given as
\begin{equation}
    \rho_{SE}(t) = U (\rho_S(0)\otimes\tau_E) U^{\dagger} = e^S (\rho_S(0)\otimes\tau_E) e^{-S}.
\end{equation}
This allows us to calculate the resulting heat and extractable work upper bound:
\begin{align}
    Q =&\ 2 \omega_{\rm ini} |\mathcal{B}|^2 (q_1 \rho_{00} - q_0 \rho_{11}) = 2 \omega_{\rm ini} |\mathcal{B}|^2 (\rho_{00} - q_0),\\
    W_{\rm ext}^{\rm upp} =&\ (\rho_{00} - \rho_{11})(\omega_{\rm fin} - \omega_{\rm ini}) + 2\omega_{\rm ini}|\mathcal{B}|^2(\rho_{00} - q_0) - \Delta E_{\rm int},\\
    \Delta E_{\rm int} =&\ - \frac{2g(\rho_{00} - q_0)\sin\Omega}{\Omega^2}\Big[y\Omega\cos\Omega + xz\sin\Omega\Big],
\end{align}
where $\rho_{ij}=\rho^{ij}_S(0)$, $q_0=(1+\exp(-\beta_{\rm ini}\omega_{\rm ini}))^{-1}$, $q_1=1-q_0$, and $|\mathcal{B}|^2 = (x^2+y^2)\left(\sin\Omega/\Omega\right)^2$.

\subsection*{The isochoric limit}
\noindent
By setting the continuous parameter $\kappa \to 0$, the adiabatic driving interval $\epsilon_1$ vanishes. The system characteristic frequency remains fixed ($\Delta\omega = 0, y = 0$), and the general unitary elegantly reduces to the pure isochoric thermalisation matrix. Applying this limiting unitary to the initial state yields the explicit reduced density matrices for the working medium and the environment at time $t$:
\begin{equation}
\rho_S(t) = \begin{pmatrix}
\rho_{11}c^2 + q_1 s^2 & e^{-i\phi}\rho_{10}\,c \\[4pt]
e^{i\phi}\rho_{01}\,c & q_1\rho_{00}c^2 + q_0(\rho_{00} + \rho_{11}s^2)
\end{pmatrix},
\quad
\rho_E(t) = \begin{pmatrix}
q_1 c^2 + \rho_{11}s^2 & ie^{-i\phi}(-1+2q_1)\rho_{10}\,s \\[4pt]
ie^{i\phi}(q_0 - q_1)\rho_{01}\,s & q_0\rho_{11}c^2 + \rho_{00}(q_0 + q_1 s^2)
\end{pmatrix}.
\end{equation}
where $c = \cos(g t)$, $s = \sin(g t)$, and $\phi = 2\omega_{\rm ini} t$. The heat exchange is the negative change in the environment's energy: 
\begin{equation}
Q = -\Delta E_E = -\text{Tr}[(\mathbb{I} \otimes H_E)(\rho_{SE}(t) - \rho_{SE}(0))] = 2\omega_{\rm ini} \sin^2(g t)(q_1\rho_{00} - q_0\rho_{11}) = 2\omega_{\rm ini} \sin^2(g t)(\rho_{00} - q_0).
\end{equation}

\section*{Efficiency of the polytropic cycle}
\noindent
The hybrid cycle consists of four sequential strokes operating between a hot bath ($r=h$) and a cold bath ($r=c$). The environment frequencies are $\omega_r$, with corresponding thermal ground-state populations $q_{r,0} = (1+e^{-\beta_r \omega_r})^{-1}$. Assuming an initial state $\rho = \text{diag}(1-p, p)$ at the start of any generic stroke, where $p$ represents the relevant population parameter, we evaluate the thermodynamic quantities using the bounds derived previously.

For a generic polytropic stroke transitioning from $\omega_i$ to $\omega_f$ in contact with bath $r$, the process is characterized by $\Delta\omega_r = \omega_f - \omega_i$ and $\overline{\omega}_r = (\omega_i + \omega_f)/2$. Defining the interaction parameters $x_r = (1-\kappa)tg$, $y_r = \frac{1}{6}g(1-\kappa)t^2\Delta\omega_r$, and $z_r = t\overline{\omega}_r - (1-\kappa)t\omega_r$, the heat and extractable work (denoted $W_{{\rm ext}}^{{\rm pol}}$) are:
\begin{align}
    Q_r^{{\rm pol}} &= 2\omega_r|\mathcal{B}_r|^2(p -q_{r,0}),\\
    W_{{\rm ext},r}^{{\rm pol}} &= (1 - 2p)\Delta\omega_r + 2\omega_r|\mathcal{B}_r|^2(p - q_{r,0}) + \frac{2g(p - q_{r,0})\sin\Omega_r}{\Omega_r^2}\Big[y_r\Omega_r\cos\Omega_r + x_rz_r\sin\Omega_r\Big],
\end{align}
where $\Omega_r=\sqrt{x_r^2+y_r^2+z_r^2}$ and $|\mathcal{B}_r|^2=(x_r^2+y_r^2)\sin^2(\Omega_r)/\Omega_r^2$.

For a generic isochoric stroke with bath $r$ at fixed frequency $\omega_r$, the work is strictly zero ($W_{{\rm ext},r}^{\rm iso}=0$), and the heat reduces to:
\begin{equation}
    Q_r^{{\rm iso}} = 2\omega_r\sin^2(g t) (p - q_{r, 0}).
\end{equation}

We map these generic equations directly to the four strokes of the steady-state limit cycle (where $\beta_r$ naturally substitutes the initial inverse temperature $\beta_{\rm ini}$ for each respective bath):
\begin{enumerate}
    \item \emph{$A\to B$ (Hot polytropic expansion):} $r=h$, $\omega_i=\omega_h \to \omega_f=\omega_c$. Initial state $\rho_A$ (population $p_A$) $\to$ output state $\rho_B$ (population $p_B$). Yields $Q_{\rm AB}$ and $W_{{\rm ext},\rm AB}^{{\rm pol}}$.
    \item \emph{$B\to C$ (Cold isochoric cooling):} $r=c$, fixed at $\omega_c$. Initial state $\rho_B$ (population $p_B$) $\to$ output state $\rho_C$ (population $p_C$). Yields $Q_{\rm BC}$ and $W_{{\rm ext},\rm BC}^{{\rm iso}} = 0$.
    \item \emph{$C\to D$ (Cold polytropic compression):} $r=c$, $\omega_i=\omega_c \to \omega_f=\omega_h$. Initial state $\rho_C$ (population $p_C$) $\to$ output state $\rho_D$ (population $p_D$). Yields $Q_{\rm CD}$ and $W_{{\rm ext},\rm CD}^{{\rm pol}}$.
    \item \emph{$D\to A$ (Hot isochoric heating):} $r=h$, fixed at $\omega_h$. Initial state $\rho_D$ (population $p_D$) $\to$ output state $\rho_A$ (population $p_A$). Yields $Q_{\rm DA}$ and $W_{{\rm ext},\rm DA}^{{\rm iso}} = 0$.
\end{enumerate}
The optimal efficiency of the cycle is the ratio of the total upper-bound work extracted during the polytropic strokes to the total heat absorbed from the hot bath:
\begin{equation}
    \mathcal{E}_{\rm pol}^{\rm opt} = \frac{W_{{\rm ext},\rm AB}^{{\rm pol}} + W_{{\rm ext},\rm CD}^{{\rm pol}}}{Q_{\rm AB} + Q_{\rm DA}}.
\end{equation}

\section*{Sign conditions for the heat currents}
\noindent
In the present hybrid cycle, the microscopic expressions for heat depend explicitly on the input state of each stroke. Hence, the standard idealised labels of \emph{hot} and \emph{cold} strokes do not automatically guarantee the desired sign pattern of heat flow. We must establish explicit conditions on the initial population of the working medium to ensure absorption during $A\to B$ and $D\to A$, and release during $B\to C$ and $C\to D$. Using the identity $q_{r,1}p_X - q_{r,0}(1-p_X) = p_X - q_{r,0}$, the exact stroke-wise heat expressions simplify to:
\begin{equation}
\begin{aligned}
    Q_{\rm AB} &= 2\omega_h\, a\, (p_A-q_{h,0}), \qquad &Q_{\rm BC} &= 2\omega_c\, s^2\, (p_B-q_{c,0}), \\
    Q_{\rm CD} &= 2\omega_c\, b\, (p_C-q_{c,0}), \qquad &Q_{\rm DA} &= 2\omega_h\, s^2\, (p_D-q_{h,0}),
\end{aligned}
\end{equation}
where $a=|\mathcal{B}_{\rm AB}|^2$, $b=|\mathcal{B}_{\rm CD}|^2$, and $s^2=\sin^2(gt)$. Assuming all interaction parameters are strictly positive ($0<a<1$, $0<s^2<1$, $b>0$), the desired thermodynamic operation ($Q_{\rm AB}>0$, $Q_{\rm BC}<0$, $Q_{\rm CD}<0$, $Q_{\rm DA}>0$) is equivalent to satisfying the population bounds:
\begin{equation}
    p_A>q_{h,0},\qquad
    p_B<q_{c,0},\qquad
    p_C<q_{c,0},\qquad
    p_D>q_{h,0}.
\label{eq:desired-sign-conditions}
\end{equation}

\noindent
\begin{proposition}
Assume $q_{h,0} < q_{c,0}$, and choose a diagonal initial state $\rho_A = \operatorname{diag}(1-p_A, p_A)$ with the initial ground-state population restricted to the interval:
\begin{equation}
    q_{h,0} < p_A < \frac{q_{c,0}-a q_{h,0}}{1-a}.
\label{eq:initial-interval}
\end{equation}
Under these conditions, the cycle strictly satisfies the desired heat-flow directions.    
\end{proposition}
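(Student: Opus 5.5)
The plan is to reduce every stroke to a one-parameter affine update of the ground-state population, and then propagate the two inequalities in \eqref{eq:initial-interval} around the cycle.

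\emph{Step 1: population update law.} Take a diagonal input $\operatorname{diag}(1-p,p)\otimes\tau_E$ and apply the explicit unitary $U=e^{S}$ derived above. This $U$ is block diagonal in $\{|11\rangle\}\oplus\{|10\rangle,|01\rangle\}\oplus\{|00\rangle\}$, so diagonal inputs stay diagonal after the partial trace. Only the single-excitation block moves population, and its off-diagonal transition amplitude has modulus squared $|\mathcal{B}|^2=(x^2+y^2)\sin^2\Omega/\Omega^2$. I would show by direct computation that the output ground-state population of $S$ is
\[
p_{\rm out}=(1-|\mathcal{B}|^2)\,p+|\mathcal{B}|^2\,q_{r,0}.
\]
The isochoric case can be checked independently from the explicit $\rho_S(t)$ given above. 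There the ground entry is $q_1\rho_{00}c^2+q_0(\rho_{00}+\rho_{11}s^2)$, which simplifies to $\rho_{00}c^2+q_0s^2$ using $q_0=q_0(c^2+s^2)$ and $\rho_{00}+\rho_{11}=1$; so it is the same law with $|\mathcal{B}|^2\to s^2$. This is also consistent with the heat formulas: the energy given to the bath equals $\omega_r$ times twice the population transferred, $|\mathcal{B}|^2(p-q_{r,0})$.

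\emph{Step 2: each stroke is a convex mixture.} Since $x^2+y^2\le\Omega^2$, we have $0\le|\mathcal{B}|^2\le 1$, and together with the assumed positivity this gives $a,b\in(0,1]$ and $s^2\in(0,1)$. Hence every stroke replaces $p$ by a convex combination of $p$ and the relevant bath population $q_{r,0}$, with strictly positive weight on $q_{r,0}$. Concretely,
\[
p_B=(1-a)p_A+aq_{h,0},\quad p_C=(1-s^2)p_B+s^2q_{c,0},\quad p_D=(1-b)p_C+bq_{c,0}.
\]

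\emph{Step 3: verify \eqref{eq:desired-sign-conditions}.}
\begin{itemize}
\item[--] The left inequality in \eqref{eq:initial-interval} gives $p_A>q_{h,0}$ directly.
\item[--] $p_B<q_{c,0}$ is equivalent to $(1-a)p_A<q_{c,0}-aq_{h,0}$, which is exactly the right inequality of \eqref{eq:initial-interval}, since $1-a>0$.
\item[--] $p_C<q_{c,0}$ holds because $p_C$ is a convex combination of $p_B<q_{c,0}$ and $q_{c,0}$ with weight $1-s^2>0$ on $p_B$.
\item[--] For $p_D>q_{h,0}$, go in the other direction. First $p_B>q_{h,0}$, as a mixture of $p_A>q_{h,0}$ and $q_{h,0}$ with weight $1-a>0$ on $p_A$. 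Then $p_C>q_{h,0}$, as a mixture of two numbers exceeding $q_{h,0}$ (using $q_{c,0}>q_{h,0}$). Finally $p_D>q_{h,0}$ for the same reason.
\end{itemize}
Substituting these four inequalities into the stroke-wise heat expressions yields $Q_{\rm AB}>0$, $Q_{\rm BC}<0$, $Q_{\rm CD}<0$ and $Q_{\rm DA}>0$. As a by-product, $q_{h,0}<p_D<q_{c,0}$, so the interval is nonempty and the $D\to A$ stroke returns a population compatible with the left bound.

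The main obstacle is Step 1 for the polytropic stroke. One has to confirm that the $Y_{\rm int}$ (BCH) correction changes only the effective swap amplitude $|\mathcal{B}|^2$ and not the affine form of the update. In the same computation one must check $|\mathcal{B}|^2\le1$, because $b\le1$ is needed in the last propagation step. I would verify this by writing the action of $e^{S_{\rm mid}}$ on $\operatorname{diag}(q_{r,0}(1-p),\,(1-q_{r,0})p)$ explicitly.
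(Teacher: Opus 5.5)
Your proposal is correct and follows essentially the same route as the paper: each stroke is written as a convex update, $p_B=(1-a)p_A+a\,q_{h,0}$, $p_C=(1-s^2)p_B+s^2q_{c,0}$, $p_D=(1-b)p_C+b\,q_{c,0}$, and the bounds $p<q_{c,0}$ and $p>q_{h,0}$ are carried stroke by stroke into the heat expressions. Your Step 1 (deriving the affine law from the block structure of $U$) and your observation that $b=|\mathcal{B}_{\rm CD}|^2\le 1$ only make explicit what the paper uses tacitly, since its last step needs $1-b\ge 0$ while it assumes just $b>0$.
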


\begin{proof}
We verify the bounds sequentially throughout the cycle:
\begin{enumerate}
    \item \textbf{Stroke $A \to B$:} By the lower bound in Eq.~\eqref{eq:initial-interval}, $p_A > q_{h,0}$, yielding $Q_{\rm AB} > 0$.
    
    \item \textbf{Stroke $B \to C$:} From the polytropic map of Stroke 1, the ground-state population evolves as
\begin{equation}
    p_B=(1-a)p_A+a q_{h,0}.
\label{eq:pB-map}
\end{equation}
Using the upper bound in Eq.~\eqref{eq:initial-interval},
\begin{align*}
    p_B &= (1-a)p_A+a q_{h,0} 
    < (1-a)\frac{q_{c,0}-a q_{h,0}}{1-a}+a q_{h,0} = q_{c,0}.
\end{align*}
Since $Q_{\rm BC}=2\omega_c s^2(p_B-q_{c,0})$ and $s^2>0$, we obtain $Q_{\rm BC}<0$. Moreover, $p_B - q_{h,0} = (1-a)(p_A - q_{h,0}) > 0$, establishing that $p_B > q_{h,0}$ for subsequent steps.

    \item \textbf{Stroke $C \to D$:} The map yields $p_C = (1-s^2)p_B + s^2 q_{c,0}$. Subtracting $q_{c,0}$ gives $p_C - q_{c,0} = (1-s^2)(p_B - q_{c,0})$. Since $p_B < q_{c,0}$ from Step 2, we have $p_C < q_{c,0}$, guaranteeing $Q_{\rm CD} < 0$. Additionally, subtracting $q_{h,0}$ yields:
    \begin{equation*}
        p_C - q_{h,0} = (1-s^2)(p_B - q_{h,0}) + s^2(q_{c,0} - q_{h,0}).
    \end{equation*}
    Since $p_B > q_{h,0}$ and we assumed $q_{c,0} > q_{h,0}$, both terms are positive, meaning $p_C > q_{h,0}$.
    
    \item \textbf{Stroke $D \to A$:} Finally, $p_D = (1-b)p_C + b q_{c,0}$. Subtracting $q_{h,0}$ gives:
    \begin{equation*}
        p_D - q_{h,0} = (1-b)(p_C - q_{h,0}) + b(q_{c,0} - q_{h,0}).
    \end{equation*}
    Using $p_C > q_{h,0}$ from Step 3 and $q_{c,0} > q_{h,0}$, the right-hand side is strictly positive. Thus, $p_D > q_{h,0}$, which yields $Q_{\rm DA} > 0$.
\end{enumerate}
\end{proof}

\begin{figure*}
    \captionsetup[subfigure]{labelformat=empty}
    \centering
    \subfloat[(a)]{\includegraphics[width=0.49\textwidth]{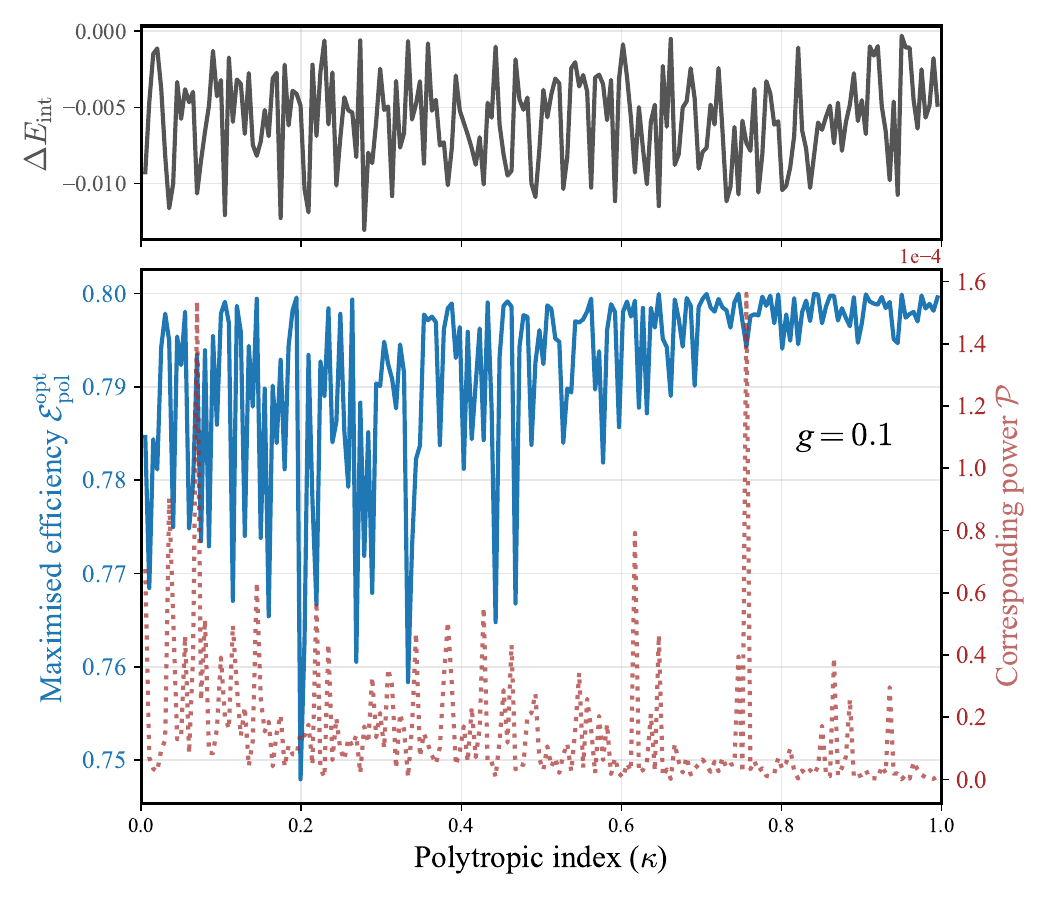}\label{fig:max_eff2}}\hfill
    \subfloat[(b)]{\includegraphics[width=0.49\textwidth]{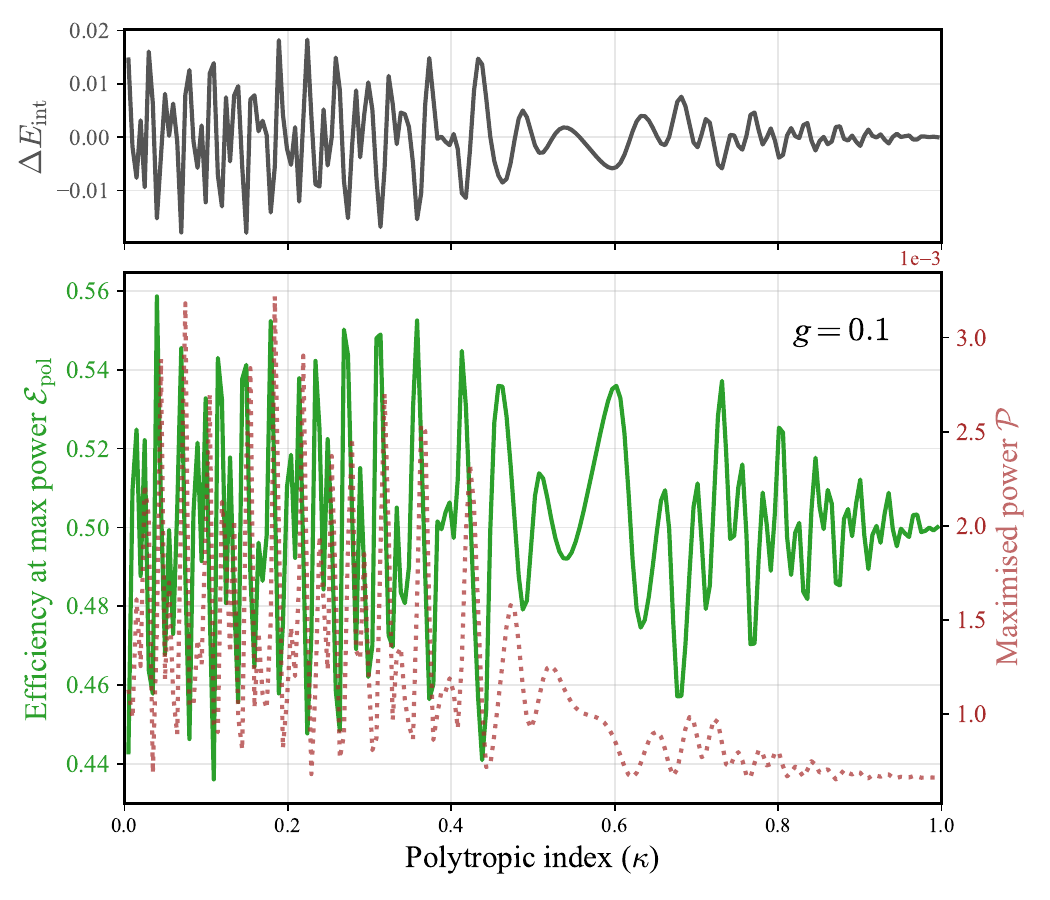}\label{fig:max_pow2}}\\
    \subfloat[(c)]{\includegraphics[width=0.49\textwidth]{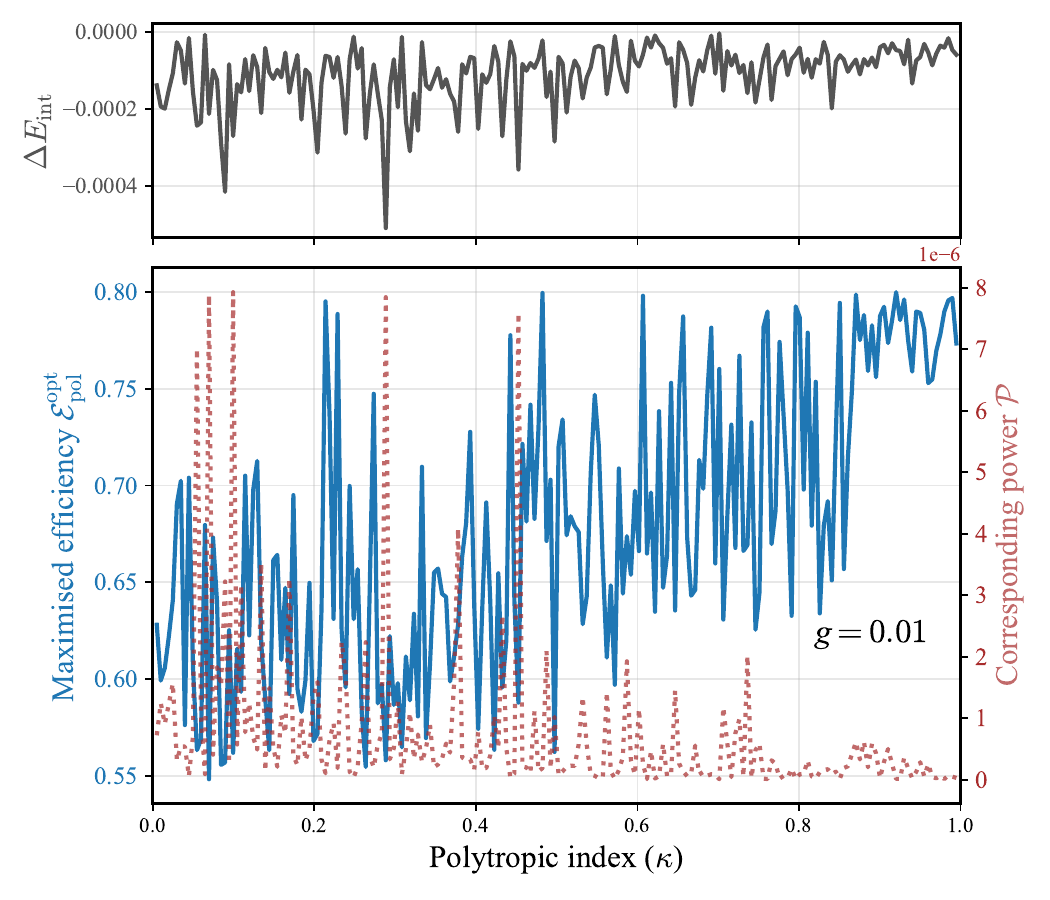}\label{fig:max_eff2}}\hfill
    \subfloat[(d)]{\includegraphics[width=0.49\textwidth]{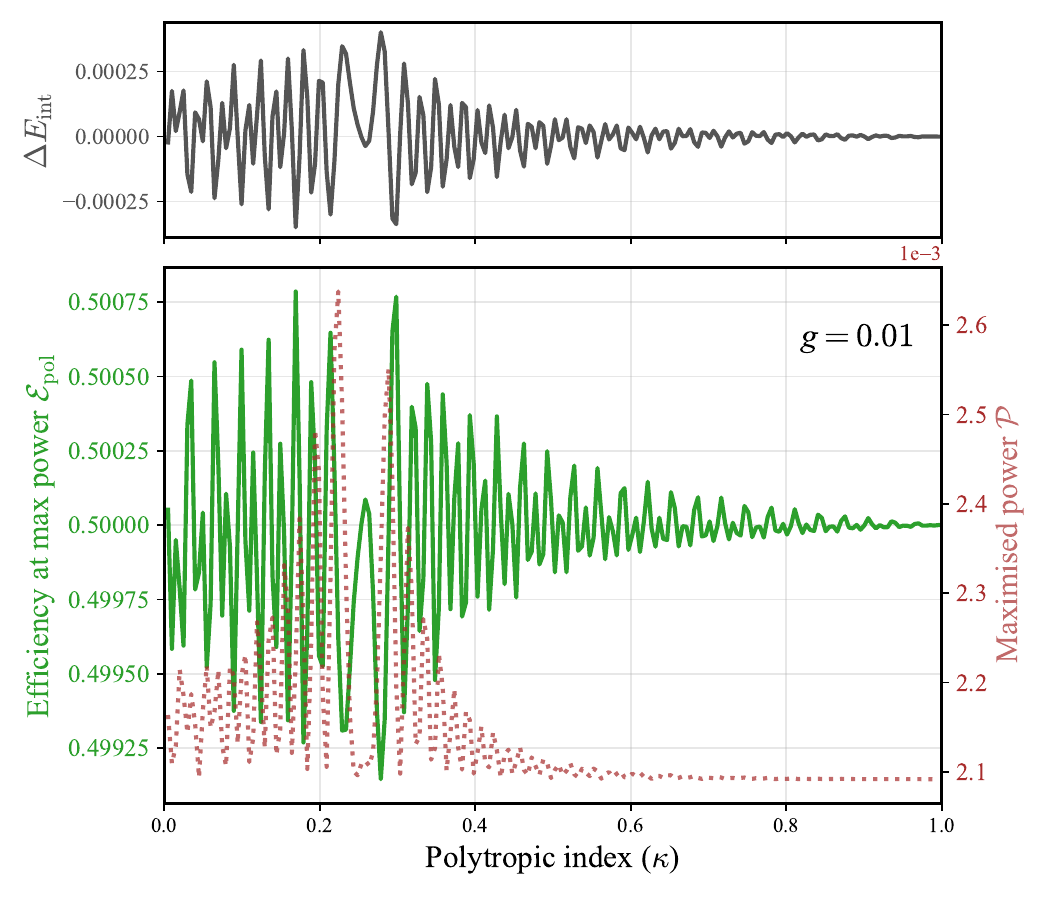}\label{fig:max_pow2}}
    \caption{\textbf{Thermodynamic performance and the finite-bath interaction cost of the hybrid cavity-optomechanical cycle.} Everything remains the same as in Fig.~\ref{fig:performance}, except for $g=0.1$ (stronger system-bath coupling) in the top row [(a) and (b)], and  $g=0.01$ (weaker system-bath coupling) in the bottom row [(c) and (d)].\label{fig:performancevariance}} 
\end{figure*}

\section{Computational methodology}\label{sec:method}
\noindent
We simulated the engine's thermodynamic cycle by tracking the joint unitary evolution of the system and bath across four discrete strokes. To keep the computational framework rigorous, we split the architecture into two modes. The first was a single-shot mode, used to validate our discrete-time matrices against analytical models by forcing a predefined initial state. The second was a steady-state mode designed to evaluate a generalised stable cycle. To generate performance data for this final mode without computing expensive transient cycles, we mapped the dynamics into Liouville space. By representing the entire four-stroke sequence as a single linear superoperator, we algebraically extracted the steady-state density matrix ($\rho_{\rm steady}$) directly from the unit eigenvalue of the superoperator. 

From $\rho_{\rm steady}$, we calculated net work, heat flow, and non-Markovian interaction energy penalties ($\Delta E_{\text{int}}$) using relative and von Neumann entropies. Finally, to locate the highest efficiency ($\mathcal{E}_{\text{pol}}^{\text{opt}}$) and corresponding power ($\mathcal{P}$) across the polytropic index $\kappa$, we deployed a $\kappa$-dependent grid search over the interaction time $t$. We dynamically scaled the upper bound of this search space using a cubic function ($t_{\max} \sim \kappa^3$). This non-linear scaling exploits a natural physical dampening in the system: as the cycle approaches the adiabatic limit ($\kappa \to 1$), the first-order Trotter errors (arising from the commutator $[H_{\rm iso}, H_{\rm adi}]$) are suppressed. While these off-diagonal error terms scale as $\mathcal{O}(t^2)$, the cubic proxy safely navigates this constraint. It tightly restricts the interaction time at low $\kappa$ to prevent rapid phase scrambling, yet expands the search space aggressively at high $\kappa$ to capture the increasingly sparse quasi-static resonances. By cleanly cutting off the search before numerical dephasing can dominate, the dynamic grid preserves the fragile non-Markovian coherent resonances and prevents the system from artificially collapsing to the memoryless Markovian limit.

\end{document}